\newcommand{\CS}{\mathsf{CS}}
\newcommand{\APP}{\mathsf{APP}}
\newcommand{\Tm}{\mathsf{Tm}}
\newcommand{\Prop}{\mathsf{Prop}}
\newcommand{\LJ}{\mathcal{L}_J}
\newcommand{\cstar}{\mathsf{c}^\star}
\newcommand{\model}{\mathcal{M}}
\newcommand{\WnullC}{W_0^C}
\newcommand{\axjplus}{\textbf{j+}}
\newcommand{\axj}{\textbf{j}}
\newcommand{\axjd}{\textbf{jd}}
\newcommand{\axcl}{\textbf{cl}}
\newcommand{\xLongLeftRightArrow}[2][]{\ext@arrow0055{\LongLeftRightArrowfill@}{#1}{#2}}
\def\LongLeftRightArrowfill@{\arrowfill@\Leftarrow\Relbar\Rightarrow}
\def\th@plain{%
	\thm@notefont{}
	\itshape 
}
\def\th@Definition{%
	\thm@notefont{}
	\normalfont 
}
\newcommand{\JD}{\mathsf{JD}}
\newcommand{\JDCS}{\mathsf{JD}_{\CS}}
\newcommand{\Wb}{W_{\not\bot}}
\newcommand{\axnc}{\textbf{noc}}
\newcommand{\JNC}{\mathsf{JNoC}}
\newcommand{\JNCCS}{\mathsf{JNoC}_{\CS}}
\newcommand{\Wnc}{W_{\mathsf{nc}}}
\newcommand{\dab}{D-arbitrary\ }
\newcommand{\noc}{NoC\ }
\theoremstyle{plain}
\newtheorem{theorem}{Theorem}
\newtheorem{lemma}[theorem]{Lemma}
\newtheorem{corollary}[theorem]{Corollary}
\newtheorem{remark}[theorem]{Remark}
\theoremstyle{definition}
\newtheorem{definition}[theorem]{Definition}
\begin{document}

\title{Impossible and Conflicting Obligations in Justification Logic}
\author{Federico L.~G.~Faroldi\thanks{Supported by FWO and FWF Lise Meitner grant M 25-27 G32.} \and
Meghdad Ghari\thanks{This research was in part supported by a grant from IPM (No.~99030420).} \and
Eveline Lehmann\thanks{Supported by the Swiss National Science Foundation grant 200020\_184625.} \and
Thomas Studer\thanks{Supported by the Swiss National Science Foundation grant 200020\_184625.}
}
\date{}


\maketitle

  \begin{abstract}
Different notions of the consistency of obligations collapse in standard deontic logic. 
In justification logics, which feature explicit reasons for obligations, the situation is different. 
Their strength depends on a constant specification and on the available set of operations for combining different reasons.
We present different consistency principles in justification logic and compare their logical strength. Further, we propose a novel semantics for which justification logics with the explicit version of axiom D, $\axjd$, are complete for arbitrary constant specifications. We then discuss the philosophical implications with regard to some deontic paradoxes.

  \end{abstract}

\section{Introduction}

Deontic logic is the logic of obligations, permissions, and sometimes other (primitive or derived) normative notions. What has emerged as the benchmark version, a system called Standard Deontic Logic (SDL), is nothing more than \textbf{KD}, the smallest normal modal logic with the D axiom schema added. For an introduction and historical overview, see \cite{HilpinenMcNamara13}. 

The D axiom is in place to ensure the consistency of obligations, but can take different formulations, for instance $\neg\mathcal{O}\bot$, or $\mathcal{O}A \rightarrow \neg \mathcal{O}\neg A$, or $\neg(\mathcal{O}A \land \mathcal{O}\neg A)$. In a normal modal logic, all these formulations are provably equivalent, and therefore it does not matter much which one is chosen. In a non-normal (but still classical) setting, for instance when an aggregation principle is missing, different versions of D are not interderivable, and it therefore matters which one is chosen, both for philosophical and for technical reasons (for deontic logic in a paraconsistent setting, see e.g. \cite{da1986paraconsistent}). One might want to distinguish, conceptually, between an obligation for an impossible or logically contradictory state of affairs ($\mathcal{O}\bot$) on one hand, and multiple obligations for jointly inconsistent states of affairs ($\mathcal{O}A \land \mathcal{O}\neg A$) on the other, because the former might thought to be self-defeating or conceptually impossible, whereas the latter can derive from different background or contingent normative systems (e.g. ethics and the law) and are only practically unenforceable, but logically possible. 
Moreover, SDL and its variants in the standard modal language lack the power to distinguish the source of one situation (an obligation for the impossible) from the source of the other (inconsistent obligations), or to exclude one situation for logical reasons and admit the other for contingent reasons. 

Justification logic~\cite{artemovFittingBook,justificationLogic2019} is an explicit version of modal logic originally developed to provide a logic of proofs~\cite{Art01BSL,KSWeak}. Instead of formulas such as $\Box A$, the language of justification logic includes formulas such as $t:A$ saying, for instance, that \emph{$t$ justifies knowledge of $A$} or \emph{$A$ is obligatory because of reason $t$}, where $t$ is a term representing the reason. 
Systems of justification logic are parameterized by a so-called \emph{constant specification} that states which logical axioms do have a justification. Hence the constant specification can be used to calibrate the strength of a justification logic. Of particular interest are axiomatically appropriate constant specifications where every axiom has a justification. In that case the justification logic enjoys a constructive analogue of the modal necessitation rule. (See Sect. 3 for a formal definition of constant specification).

The explicit counterpart in justification logic of one version of D (in standard modal logic) was first formulated by Brezhnev~\cite{Bre00TR} as axiom $\axjd$, i.e.~$\lnot (t:\bot)$. This axiom turned out to be rather notorious. Usually one can establish completeness of a justification logic for an arbitrary constant specification. However, in the presence of $\axjd$ this is not the case. Systems that include $\axjd$ usually need an axiomatically appropriate constant specification in order to be complete. 
Kuznets~\cite{Kuz00CSL} defined M-models for justification logics with $\axjd$  and 
Pacuit~\cite{Pac05PLS} presented F-models for $\axjd$.
Modular models for  $\axjd$ have been studied in~\cite{KuzStu12AiML} and subset models for  $\axjd$ are introduced in~\cite{StuderLehmannSubsetModel2019}.
For all these different semantics, an axiomatically appropriate constant specification is required in order to obtain a completeness result.
Notable exceptions to this phenomenon are M-models (defined in~\cite{Kuz00CSL}) and Fk-models (defined in~\cite{Kuz08PhD}) for which completeness holds for arbitrary constant specifications. 

The requirement of an axiomatically appropriate constant specification is often overlooked. 
In particular, this requirement is omitted in the completeness theorems given in 
\cite{FaroldiHyperPracticalReasons} and
\cite{StuderLehmannSubsetModel2019} (although for the latter paper it seems that it has been corrected later~\cite{LehmannStuderArXiv19}).
In the case of~\cite{FaroldiHyperPracticalReasons} the appropriateness requirement is important since the solution to avoid some of the known paradoxes, such as Ross', is to restrict the constant specification. But then the resulting justification logic is not complete anymore.

In this paper we study in detail 
the  $\axjd$ axiom and related principles. We compare their logical strength and we investigate the role of the constant specification. After an informal discussion in Section 2, in Sections 3 and 4 we present the basic syntax and semantics of system $\JD$ (with axiom $\axjd$). In Section 5 we propose a novel semantics for which justification logics with $\axjd$ are complete for arbitrary constant specifications. In Section 6, we consider system $\JNC$, which has a different version of the consistency axiom, \textbf{noc}. In Section~7, we establish that various formulations of consistency are equivalent only with an axiomatically appropriate constant specification.

{\bf Acknowledgements.} We are grateful to the anonymous reviewers of DEON 2020 for many helpful comments.

\section{Impossible vs Inconsistent Obligations: An Overview}\label{overview}

Standard (implicit) systems of deontic logic conflate impossible and conflicting obligations. One thing is to say that nothing logically impossible can be obligatory, i.e.~$\lnot \mathcal{O} \bot$, another to say that there are not (or there should not be) conflicting provisions that are obligatory, i.e.~$\lnot(\mathcal{O} A \land \mathcal{O} \neg A)$. Standard systems can derive $\mathcal{O} \bot$ from $\mathcal{O} A \land \mathcal{O} \neg A$ and vice versa, thus suffering a collapse.
One way to see the difference is that the former might be argued to be unacceptable for conceptual or logical reasons (e.g. that such an obligation would be conceptually self-defeating), whereas the latter might be argued to be unacceptable for contingent reasons (e.g. that such obligations cannot be fulfilled in reality, although can potentially still arise in real-life situations). 
\cite{chellas74,chellas} use minimal models, \cite{nonkrikedeon} 
uses multiple accessibility relations in the disjunctive truth condition of the ought operator: in such ways the authors avoid  aggregation and therefore the collapse of impossible to inconsistent obligations (multi-relational semantics has also been used more recently, cf.~e.g.~\cite{CalardoForthcoming-CALQIS}).

In justification logic we have the explicit counterparts \textbf{jd}: $\lnot (t :\bot)$ and \textbf{noc}: \mbox{$\lnot (t: A \land t:\neg A )$}, respectively, of the above implicit principles, giving rise to systems we call $\JD$ and $\JNC$ (respectively).
Corollary~\ref{c:onedir:1} establishes that the former implies the latter. Lemma~\ref{l:notbot:2}
shows that the converse direction holds in the presence  of an axiomatically appropriate constant specification. In this situation we have the same 
collapse as in the standard implicit systems. There are two options to avoid this consequence:
\begin{itemize}
\item
In justification logic we can use the constant specification to adjust the power of the logical systems and thus avoid the collapse. Lemma~\ref{onedir:3} shows that $\lnot (t: A \land t:\neg A )$ does not imply $\lnot (t :\bot)$ if the constant specification is not axiomatically appropriate. Theorems~\ref{th:sc:1} and~\ref{th:sc:2} prove that $\JD$ and $\JNC$ with an arbitrary constant specification are complete with regard to a novel semantics we develop.
\item
As explained in Remark~\ref{rem:sum:1}, we can avoid the collapse even in the presence of an axiomatically appropriate $\CS$.
It suffices to consider a language without the $+$ operation. 
We denote this system $\JNC^-$.
\end{itemize}

Avoiding this collapse is important in situations with conflicting obligations. Let us look at Sartre's Dilemma~\cite{Lemmon1962-LEMMD} as presented in~\cite{McNamara2019}:
\begin{enumerate}
\item
It is obligatory that I now meet Jones (say, as promised to Jones, my friend).
\item
It is obligatory that I now do not meet Jones (say, as promised to Smith, another friend).
\end{enumerate}

In implicit standard deontic logic featuring the principle $\lnot (\mathcal{O} A \land \mathcal{O} \neg A)$, we immediately get a contradiction if we represent (1) and (2) as $\mathcal{O} A$ and $\mathcal{O}  \lnot A$, respectively. 
However, in a system such as $\JNC^-$, there is no conflict as there are two different reasons in (1) and (2). Hence (1) and (2) are represented by
$s:A$ and $t:\lnot A$ for two different terms $s$ and $t$, which is consistent with axiom~$\axnc$.

Moreover, in normal deontic logic, 
one can pass from \emph{two} inconsistent obligations to \emph{one} impossible obligation. This is dubious on philosophical grounds: we have pointed out that one may argue that one impossible obligation is conceptually self-defeating, whereas two inconsistent obligations may be in place for contingent reasons (e.g. different promises). 

Justification logic gives us the means to not conflate the two, without loosing too much reasoning power. Even more, one can do justice to the background philosophical intuitions to exclude impossible obligations for logical reasons, for instance by focusing on the system $\JNC$ and calibrating the constant specification.
Keeping track of the source of obligations, for instance through reasons, opens up the possibility to solve conflicts if one has a priority ordering on reasons (see for instance \cite{Horty2012}, and \cite{farolditudoATC} for an implementation in justification logic).

In the rest of the paper we present the formal results starting from the basic syntax and semantics of system $\JD$ (with axiom $\axjd$).

\section{Syntax}

Justification terms are built from countably many constants $c_i$ and variables~$x_i$ 
according to the following grammar:
\[
t::=c_i \ |\  x_i\ | \ t \cdot t\ | \ (t+t) \ | \ !t 
\]
The set of terms is denoted by $\Tm$.

Formulas are built from countably many atomic propositions $P_i$ and the  symbol~$\perp$ according to the following grammar:
\[
F::=P_i \ | \ \perp \ | \ F\to F \ | \ t:F
\]
The set of atomic propositions is denoted by $\Prop$ and the set of all formulas is denoted by $\LJ$. The other classical Boolean connectives $\neg,\top, \land,\lor,\leftrightarrow$ are defined as usual, in particular we have $\lnot A := A \to \bot$ and $\top := \lnot \bot$. Informally, $+$ mimics the aggregation of reasons, $\cdot$ embodies modus ponens reasoning, and ! is positive introspection. We keep ! for ease of exposition, but it can be dispensed with. For a discussion on the interpretation of the operations in a deontic context, see~\cite{FaroldiHyperPracticalReasons}.

The axioms of $\JD$ are the following:
\begin{fleqn}
\begin{equation}
\begin{array}{ll}\nonumber
\axcl & \text{all axioms of classical propositional logic};\\
\axjplus & s:A\lor t:A\to (s+t): A;\\
\axj & s:(A\to B)\to (t:A\to s\cdot t:B);\\
\axjd & \lnot (t:\perp).
\end{array}
\end{equation}
\end{fleqn}
Note that since $\lnot$ is a defined notion, $\axjd$ actually stands for $t:\perp\to\perp$.

Justification logics are parameterized by a so-called constant specification, which is a set
\[
\CS\subseteq\{(c, A)\enspace|\enspace c\text{ is a constant and } A\text{ is an axiom of }\JD\}.
\]
Our logic $\JDCS$ is now given by the axioms of $\JD$ and the rules modus ponens:
\begin{prooftree}
	\AxiomC{$A$}
	\AxiomC{$A\to B$}
	\RightLabel{(MP)}
	\BinaryInfC{$B$}
\end{prooftree}
and axiom necessitation

\begin{prooftree}
	\AxiomC{}
	\RightLabel{(AN!)\quad $\forall n\in\mathbb{N}$, where $(c, A)\in\CS$}
	\UnaryInfC{$\underbrace{!...!}_{n}c:\underbrace{!...!}_{n-1}c:\ ...:\ !!c:\ !c:c:A$}
\end{prooftree}

\begin{definition}[Axiomatically appropriate $\CS$]\label{def:cs_axiomatocally_appropriate}
	A constant specification $\CS$ is called \emph{axiomatically appropriate} if for each axiom $A$, there is a constant $c$ with $(c,A) \in \CS$.
\end{definition}

Axiomatically appropriate constant specifications are important as they provide a form of necessitation~\cite{Art01BSL,artemovFittingBook,justificationLogic2019}.

\begin{lemma}
Let $\CS$ be an axiomatically appropriate constant specification.
For each formula $A$ with 
\[
\JDCS \vdash A,
\]
 there exists a term $t$ such that
\[
\JDCS \vdash t:A.
\]
\end{lemma}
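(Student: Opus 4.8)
The plan is to prove this by induction on the length of the derivation $\JDCS \vdash A$. This is the standard ``internalization'' or ``lifting'' lemma for justification logic, and the key tools are the two operations $\cdot$ and $!$ together with the fact that $\CS$ is axiomatically appropriate, which guarantees every axiom has a justifying constant.

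First I would set up the base cases. If $A$ is an axiom, then since $\CS$ is axiomatically appropriate there is a constant $c$ with $(c,A) \in \CS$, and hence by the rule (AN!) with $n=1$ we have $\JDCS \vdash c:A$, so we take $t := c$. The other base case concerns formulas introduced by (AN!) itself: if $A$ has the form $!^{n-1}c:\cdots:!c:c:B$ arising from some $(c,B)\in\CS$, then applying (AN!) with $n$ incremented by one yields $!^n c:A$, so we take $t := !^{n}c$. (Equivalently, one observes $\JDCS \vdash !u : (u:C)$ whenever $\JDCS \vdash u:C$ was obtained by axiom necessitation, which handles the introspective nesting.)

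For the inductive step, the only genuine rule left is modus ponens. Suppose $A$ was derived from premises $B$ and $B \to A$. By the induction hypothesis there are terms $s$ and $u$ with $\JDCS \vdash s:B$ and $\JDCS \vdash u:(B\to A)$. I would then invoke the application axiom $\axj$, namely $u:(B\to A)\to (s:B \to u\cdot s:A)$, and apply modus ponens twice: first to $\axj$ and $\JDCS \vdash u:(B\to A)$, then to the result and $\JDCS \vdash s:B$, obtaining $\JDCS \vdash u\cdot s:A$. Thus we set $t := u\cdot s$. Note that axiom $\axjplus$ and axiom $\axjd$ play no role here; the argument depends only on having $\axj$ and an axiomatically appropriate $\CS$.

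I do not expect a serious obstacle, as the proof is routine. The only point requiring care is the treatment of the axiom necessitation rule, since the conclusions of (AN!) are not axioms but may themselves need to be internalized; this is precisely why the rule is stated for all $n\in\mathbb{N}$ with the iterated $!$ prefix, and one should make sure the induction handles a premise that was itself the output of (AN!) by simply increasing $n$. A second minor point is that the statement only requires existence of \emph{some} term $t$, not a canonical one, so there is no need to track the precise shape of $t$ beyond exhibiting the construction above.
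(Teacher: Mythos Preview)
Your argument is correct and is precisely the standard internalization (lifting) proof for justification logic. The paper does not actually supply a proof of this lemma; it merely states it and points to the literature (Artemov's original paper and the textbooks by Artemov--Fitting and Kuznets--Studer), where exactly the induction you outline is carried out. One tiny indexing slip: with the paper's formulation of (AN!), the case yielding $c:A$ corresponds to $n=0$, not $n=1$; but this does not affect the argument.
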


\section{Semantics}

We recall the basic definitions and results about subset models for justification logic~\cite{StuderLehmannSubsetModel2019,StuderLehmannSubsetModelLFCS2020,exploringSM}.

\begin{definition}[General subset model] Given some constant specification~$\CS$, then a general $\CS$-subset model $\mathcal{M}=(W, W_0, V, E)$ is defined by:
\begin{itemize}
\item $W$ is a set of objects called worlds.
\item $W_0\subseteq W$ and $W_0\neq\emptyset$ .
\item $V: W\times\LJ\to \{0, 1\}$  such that for all $\omega\in W_0$, $t\in\Tm$, $F, G\in\LJ$:
	\begin{itemize}
	\item $V(\omega, \perp)=0$;
	\item $V(\omega, F\to G)=1\quad\text{ iff }\quad V(\omega,F)=0$ or $V(\omega, G)=1$;
	\item\label{condition_tF} $V(\omega, t:F)=1\quad\text{ iff }\quad E(\omega, t)\subseteq\Set{\upsilon\in W | V(\upsilon, F)=1}$.
	\end{itemize}
\item $E: W\times\Tm \to \mathcal{P}(W)$ that meets the following conditions
where we use 
\begin{equation}\label{eq:truthset:1}
[A]:=\{\omega\in W\enspace|\enspace V(\omega, A)=1\}.
\end{equation}
For all $\omega\in W_0$, and for all $s, t\in\Tm$:
	\begin{itemize}
	\item $E(\omega, s+t)\subseteq E(\omega, s)\cap E(\omega, t)$;
	\item $E(\omega, s\cdot t)\subseteq\{\upsilon\in W\enspace|\enspace \forall F\in\APP_\omega(s, t)(\upsilon\in[F])\}$ where $\APP$ contains all formulas that can be justified by an application of~$s$ to~$t$, see below;
	\item $\exists\upsilon\in W_0$ with $\upsilon\in E(\omega, t)$;
		\item for all $n\in\mathbb{N}$ and for all $(c, A)\in\CS: E(\omega, c)\subseteq[A]$ and 
	\[E(\omega, \underbrace{!...!}_{n}c)\subseteq[\underbrace{!...!}_{n-1}c:....!c:c:A].\]

	\end{itemize}
\end{itemize}
The set $\APP$ is formally defined as follows:
\begin{align*}
\APP_\omega(s, t):=\{F\in\LJ \ |\enspace	&\exists H\in\LJ \text{ s.t.~}\\
									&E(\omega, s)\subseteq[H\to F]\text{ and }E(\omega, t)\subseteq[H]\};
\end{align*}
\end{definition}
$W_0$ is the set of \emph{normal} worlds. 
The set $W \setminus W_0$ consists of the \emph{non-normal} worlds. 
Moreover, using the notation introduced by \eqref{eq:truthset:1}, we can read the condition on $V$ for justification formulas $t:F$ as:
\[V(\omega, t:F)=1\quad\text{ iff }\quad E(\omega, t)\subseteq[F]\]

In subset semantics terms are not treated only syntactically (as in most other semantics for justification logics), but they get assigned a set of worlds. 

$E(\omega, t)$ tells us the states that are ideal according to $t$ from $\omega$'s perspective.  
 Then $t : F$ at $\omega$ is true just in case $F$ is true at those ideal states. 
We have seen that a formula of the form $t : F$ is true at a world $w$ just in case the interpretation of $t$ at $w$ (a set of worlds) is a subset of the truth set of $F$ (the set of worlds where $F$ is true). However, take two axioms $A$ and $B$. They are true in all possible worlds. Therefore, every term that is a reason for the former will also be a reason for the latter (if terms get assigned sets of \emph{possible} worlds). But in this way, there is no control on the constant specification. Using \emph{impossible} worlds, however, lets us solve this problem, because at impossible worlds classical logically equivalent propositions can differ in truth value, and a justification for one may not be a justification for the other. This makes the semantics able to capture hyperintensionality.

Since the valuation function $V$ is defined on worlds and formulas, the definition of truth is standard.

\begin{definition}[Truth] Given a subset model 
\[
\mathcal{M}=(W, W_0, V, E)
\]
and a world\/ $\omega\in W$ and a formula $F$ we define the relation $\Vdash$ as follows:
\[\mathcal{M}, \omega\Vdash F\quad\text{ iff }\quad V(\omega, F)=1.\]
\end{definition}

Validity is defined with respect to the normal worlds.
\begin{definition}[Validity]
Let\/ $\CS$ be a constant specification.
We say that a formula $F$ is \emph{general $\CS$-valid} if for each general $\CS$-subset model 
\[
\mathcal{M}=(W, W_0, V, E)
\]
and each $\omega\in W_0$, we have $\mathcal{M}, \omega\Vdash F$.
\end{definition}

As expected, we have soundness~\cite{StuderLehmannSubsetModel2019}.
\begin{theorem}[Soundness] Let $\CS$ be an arbitrary constant specification. 
For each formula $F$ we have that if $\JDCS\vdash F$, then $F$ is general $\CS$-valid.
\end{theorem}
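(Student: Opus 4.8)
The plan is to proceed by induction on the length of a derivation of $F$ in $\JDCS$, fixing an arbitrary general $\CS$-subset model $\model=(W,W_0,V,E)$ and an arbitrary normal world $\omega\in W_0$, and showing $V(\omega,F)=1$. Since validity is checked only at normal worlds, and the defining conditions on $V$ at worlds in $W_0$ force $V(\omega,\cdot)$ to be a classical (Boolean) valuation --- the clause for $\perp$ gives $V(\omega,\perp)=0$ and the clause for $\to$ is exactly the Boolean truth table --- every instance of $\axcl$ comes out true at $\omega$, and the rule (MP) preserves truth at $\omega$. This reduces the problem to checking the justification axioms and the rule (AN!).

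For the justification axioms I would unfold the reading $V(\omega,t:F)=1$ iff $E(\omega,t)\subseteq[F]$ and exploit the corresponding closure conditions on $E$. For $\axjplus$, if $s:A\lor t:A$ holds at $\omega$ then $E(\omega,s)\subseteq[A]$ or $E(\omega,t)\subseteq[A]$; since $E(\omega,s+t)\subseteq E(\omega,s)\cap E(\omega,t)$, in either case $E(\omega,s+t)\subseteq[A]$, so $(s+t):A$ holds. For $\axj$, assuming $s:(A\to B)$ and $t:A$ at $\omega$ gives $E(\omega,s)\subseteq[A\to B]$ and $E(\omega,t)\subseteq[A]$; taking $H:=A$ witnesses $B\in\APP_\omega(s,t)$, whence the condition on $E(\omega,s\cdot t)$ yields $E(\omega,s\cdot t)\subseteq[B]$ and so $s\cdot t:B$ holds.

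The key case is $\axjd$, which unfolds to $t:\perp\to\perp$. Since $V(\omega,\perp)=0$ at the normal world $\omega$, it suffices to show $V(\omega,t:\perp)=0$, i.e.~$E(\omega,t)\not\subseteq[\perp]$. This is exactly where the serial-type condition on $E$ is used: there exists $\upsilon\in W_0$ with $\upsilon\in E(\omega,t)$, and since $\upsilon$ is normal we have $V(\upsilon,\perp)=0$, so $\upsilon\notin[\perp]$ and hence $E(\omega,t)\not\subseteq[\perp]$. I would stress that this argument needs no assumption on $\CS$, which is precisely why soundness holds for arbitrary constant specifications even in the presence of $\axjd$.

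Finally, for the rule (AN!) the model conditions are tailored to the conclusion: for $(c,A)\in\CS$ the clause $E(\omega,c)\subseteq[A]$ gives the base case $n=0$ directly, while the clause $E(\omega,\underbrace{!...!}_{n}c)\subseteq[\underbrace{!...!}_{n-1}c:...:c:A]$ is literally the statement that the conclusion of (AN!) is true at $\omega$, so no further induction is needed inside this case. I expect no genuine obstacle here; the only care required is matching the nested-term clause in the model definition to the exact shape of the (AN!) conclusion, and confirming in the $\axcl$ case that the Boolean conditions on $V$ really do induce a classical valuation at every normal world.
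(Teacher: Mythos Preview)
Your argument is correct and is exactly the standard induction on derivations one expects; each case matches the corresponding clause in the definition of a general $\CS$-subset model, and your treatment of $\axjd$ via the seriality condition (picking $\upsilon\in W_0\cap E(\omega,t)$ and noting $V(\upsilon,\perp)=0$) is the intended point. The paper itself does not spell out a proof of this theorem at all---it simply cites~\cite{StuderLehmannSubsetModel2019}---so there is nothing to compare against beyond noting that your inductive argument is the routine one that reference presumably contains.
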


However, completeness only holds if the constant specification is axiomatically appropriate~\cite{LehmannStuderArXiv19}.
\begin{theorem}[Completeness] Let $\CS$ be an axiomatically appropriate constant specification. 
For each formula $F$ we have that if $F$ is general $\CS$-valid, then $\JDCS\vdash F$.
\end{theorem}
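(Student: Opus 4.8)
The plan is to prove completeness by contraposition via a canonical model. Assume $\JDCS \nvdash F$; I will build a general $\CS$-subset model $\model = (W, W_0, V, E)$ with a normal world at which $F$ fails. For the normal worlds I take $W_0$ to be the set of all maximal $\JDCS$-consistent sets, and I put $V(\Gamma, G) = 1$ iff $G \in \Gamma$. For each term $t$ and each $\Gamma$ I define the reason assignment by
\[
E(\Gamma, t) = \{\Delta \in W \mid \Gamma/t \subseteq \Delta\}, \qquad \Gamma/t := \{G \in \LJ \mid t{:}G \in \Gamma\}.
\]
Since $\nvdash F$ means $\{\lnot F\}$ is $\JDCS$-consistent, it extends to some $\Gamma \in W_0$ with $F \notin \Gamma$, so $V(\Gamma, F) = 0$; it then remains to check that $\model$ really is a general $\CS$-subset model.

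The non-normal worlds are the crucial ingredient. As the informal discussion stresses, if every world were classically closed then $E(\Gamma, t) \subseteq [F]$ would hold whenever $\Gamma/t$ classically entails $F$, even when $t{:}F \notin \Gamma$, and the truth clause for $t{:}F$ would break. I therefore add to $W$ a stock of impossible worlds, e.g. all the raw sets $\Gamma/t$ themselves (with $V$ again given by membership and no closure imposed, which is legitimate since the subset-model conditions on $V$ and $E$ are required only at normal worlds). The Truth Lemma, proved by induction on the structure of $G$, then reduces to showing $V(\Gamma, t{:}G) = 1$ iff $E(\Gamma, t) \subseteq [G]$ at $\Gamma \in W_0$: the forward direction is immediate since $G \in \Gamma/t \subseteq \Delta$ for every $\Delta \in E(\Gamma, t)$, while for the converse, if $t{:}G \notin \Gamma$ then the impossible world $\Gamma/t$ lies in $E(\Gamma, t)$ yet omits $G$, witnessing $E(\Gamma, t) \not\subseteq [G]$.

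The remaining conditions on $E$ at normal worlds follow from the axioms via the Truth Lemma. The inclusion $E(\Gamma, s{+}t) \subseteq E(\Gamma, s) \cap E(\Gamma, t)$ comes from $\axjplus$ (which yields $\Gamma/s \cup \Gamma/t \subseteq \Gamma/(s{+}t)$); the application clause comes from $\axj$, since $\APP_\Gamma(s, t)$ as characterised through the Truth Lemma is contained in $\Gamma/(s\cdot t)$; and the constant-specification clauses for $c$ and the iterated terms $!\cdots!c$ hold because the formulas produced by (AN!) are theorems and hence lie in every $\Gamma \in W_0$.

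The main obstacle is the seriality condition $\exists\, \upsilon \in W_0$ with $\upsilon \in E(\Gamma, t)$, and it is exactly here that axiomatic appropriateness is indispensable. This amounts to showing that $\Gamma/t$ is $\JDCS$-consistent, so that it extends to a maximal consistent set in $W_0$. Suppose not: then $t{:}G_1, \dots, t{:}G_n \in \Gamma$ with $\vdash G_1 \to (\cdots \to (G_n \to \bot))$. By the preceding internalization Lemma --- whose proof needs an axiomatically appropriate $\CS$ --- there is a term $s$ with $\vdash s{:}(G_1 \to \cdots \to G_n \to \bot)$. Feeding in the $t{:}G_i$ one at a time through $\axj$ yields a term $u$ with $u{:}\bot \in \Gamma$, contradicting the instance $\lnot(u{:}\bot)$ of $\axjd$. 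Hence $\Gamma/t$ is consistent and seriality holds. This step is precisely what fails for non-appropriate $\CS$, matching the paper's observation that completeness then breaks down; the remaining clauses are routine, and assembling them establishes the contrapositive.
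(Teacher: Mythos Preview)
Your proposal is correct and follows essentially the same canonical-model route as the paper (and the references it cites): worlds are sets of formulas with $V$ given by membership, $E^C(\Gamma,t)=\{\Delta\mid \Gamma/t\subseteq\Delta\}$, and the seriality clause is discharged by showing $\Gamma/t$ is $\JDCS$-consistent via internalization plus $\axj$ and $\axjd$, which is exactly where axiomatic appropriateness enters. The only cosmetic difference is that the paper simply takes $W^C=\mathcal{P}(\LJ)$, i.e.\ \emph{all} subsets of $\LJ$, as the stock of non-normal worlds rather than singling out the sets $\Gamma/t$; your smaller choice works for the same reason, since the witness you actually use in the Truth Lemma is $\Gamma/t$ itself.
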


One might need more control on the constant specification, e.g.~by relinquishing the requirement that each axiom be justified. For instance,~\cite{FaroldiHyperPracticalReasons} argued that restricting the constant specification is one way to avoid certain deontic paradoxes, such as Ross'. In the next section, we prove soundness and completeness with regard to an arbitrary constant specification.

\section{\dab subset models}

We present a novel class of subset models for $\JD$ and establish soundness and completeness.

\begin{definition}[\dab subset model]
A \emph{\dab $\CS$-subset model} $\mathcal{M}=(W, W_0, V, E)$ is defined like a general $\CS$-subset model with the condition
\[
\exists\upsilon\in W_0 \text{ with }\upsilon\in E(\omega, t)
\]
being replaced with 
\[
\exists\upsilon\in \Wb \text{ with }\upsilon\in E(\omega, t)
\]
where $\Wb:=\{\omega\in W\ | \ V(\omega, \bot)=0\}$.
\end{definition}

The notion of \dab $\CS$-validity is now as expected.
\begin{definition}[\dab validity]
Let $\CS$ be a constant specification.
We say that a formula $F$ is \emph{\dab $\CS$-valid} if for each  \dab $\CS$-subset model $\mathcal{M}=(W, W_0, V, E)$ and each $\omega\in W_0$, we have $\mathcal{M}, \omega\Vdash F$.
\end{definition}

We have soundness and completeness with respect to arbitrary constant specifications.

\begin{theorem}[Soundness and Completeness]\label{th:sc:1} 
Let $\CS$ be an arbitrary constant specification. 
For each formula $F$ we have 
\[
\JDCS\vdash F \quad\text{if{f}}\quad \text{$F$ is D-arbitrary $\CS$-valid.}
\]
\end{theorem}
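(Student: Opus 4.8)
The plan is to prove both directions separately, with soundness being essentially a routine check and completeness being the substantive part. For soundness, I would verify that each axiom of $\JD$ is \dab $\CS$-valid and that the two rules preserve validity. The only axiom requiring attention is $\axjd$, i.e.\ $t:\bot\to\bot$: at a normal world $\omega\in W_0$, if $V(\omega, t:\bot)=1$ then $E(\omega,t)\subseteq[\bot]$, but the new frame condition guarantees some $\upsilon\in\Wb$ with $\upsilon\in E(\omega,t)$, and since $\upsilon\in\Wb$ means $V(\upsilon,\bot)=0$, we get $\upsilon\notin[\bot]$, a contradiction; hence $V(\omega,t:\bot)=0$ and $\axjd$ holds. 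The remaining axioms ($\axcl$, $\axjplus$, $\axj$) and the rules (MP), (AN!) are handled exactly as in the general subset model soundness proof, since the only change from general models is the relaxation of the frame condition, which does not affect them.

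The harder direction is completeness: assuming $\JDCS\nvdash F$, I must build a \dab $\CS$-subset model with a normal world where $F$ fails. The natural approach is a canonical model construction. First I would extend the consistent set $\{\lnot F\}$ (or rather the set witnessing $\JDCS\nvdash F$) to a maximal $\JDCS$-consistent set, and take $W$ to be the collection of all maximal $\JDCS$-consistent sets. I would define $V(\Gamma, G)=1$ iff $G\in\Gamma$, which is well-behaved for $\bot$ and $\to$ by maximal consistency. The normal worlds $W_0$ should be those maximal consistent sets that are ``fully rational'' in the relevant sense; the key design choice is the evidence function, for which I would set $E(\Gamma, t):=\{\Delta\in W\mid \forall G\, (t:G\in\Gamma \Rightarrow G\in\Delta)\}$, so that $E(\Gamma,t)\subseteq[G]$ precisely when $t:G\in\Gamma$, delivering the truth condition for justification formulas.

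The main obstacle is verifying the \emph{new} frame condition, namely that for every $\omega\in W_0$ and every term $t$ there exists $\upsilon\in\Wb$ with $\upsilon\in E(\omega,t)$; this is exactly where arbitrary constant specifications become admissible, since the general-model condition demanded a witness in $W_0$, which fails without axiomatic appropriateness. The point is that $\Wb$ only requires $V(\upsilon,\bot)=0$, i.e.\ $\bot\notin\upsilon$, which is automatic for any maximal consistent set. So I would argue: fix $\omega\in W_0$ and $t$; the set $\{G\mid t:G\in\omega\}$ is $\JDCS$-consistent---for if it proved $\bot$, then finitely many $G_1,\dots,G_k$ with $t:G_i\in\omega$ would yield $\vdash (G_1\land\dots\land G_k)\to\bot$, and using $\axj$ together with the conjunction of the $t:G_i$ one derives $t':\bot$ for some term $t'$, contradicting $\axjd$ applied within the consistent set $\omega$. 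Any maximal consistent extension $\upsilon$ of this set then satisfies $\upsilon\in E(\omega,t)$ and $\upsilon\in\Wb$, as required. I expect the remaining frame conditions ($E$ for $s+t$, $s\cdot t$, and the constant clauses) and the Truth Lemma to go through by the standard subset-model canonical argument, so the genuinely delicate step is pinning down the consistency of $\{G\mid t:G\in\omega\}$ and seeing that it needs $\axjd$ but \emph{not} appropriateness of $\CS$.
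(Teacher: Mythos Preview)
Your soundness sketch is fine, and the overall strategy for completeness is the right one, but the canonical model you describe does \emph{not} work for an arbitrary constant specification. The gap is in the claim that $\{G\mid t:G\in\omega\}$ is $\JDCS$-consistent. Your argument for this is: if $G_1,\dots,G_k\in\omega/t$ jointly entail $\bot$, then ``using $\axj$ together with the conjunction of the $t:G_i$ one derives $t':\bot$.'' But to run that step you need a term $r$ with $\JDCS\vdash r:(G_1\to(G_2\to\dots\to(G_k\to\bot)))$, or at least justified versions of the propositional axioms used in the derivation; this is precisely the internalization (lifting) lemma, which requires an axiomatically appropriate $\CS$. For a concrete failure: with the empty $\CS$, the set $\{x:(P\land\lnot P)\}$ is $\JDCS$-consistent, since no instance of $\axj$ can fire without a justified implication on the left, yet for any maximal consistent $\Gamma\supseteq\{x:(P\land\lnot P)\}$ the set $\Gamma/x$ contains $P\land\lnot P$ and is therefore inconsistent. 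In your model this forces $E(\Gamma,x)=\emptyset$, which simultaneously violates the frame condition you are trying to verify and breaks the Truth Lemma (you would get $E(\Gamma,x)\subseteq[\bot]$ while $x:\bot\notin\Gamma$).

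The paper avoids this by taking a larger universe: $W^C=\mathcal{P}(\LJ)$, with $W_0^C$ the maximal $\JDCS$-consistent sets and $V^C(\Gamma,F)=1$ iff $F\in\Gamma$, and $E^C(\Gamma,t)=\{\Delta\in W^C\mid \Gamma/t\subseteq\Delta\}$. The witness for the new frame condition is then simply $\Delta:=\Gamma/t$ itself, used as a (possibly non-normal) world: from $\axjd$ one gets $t:\bot\notin\Gamma$, hence $\bot\notin\Gamma/t$, hence $V^C(\Delta,\bot)=0$ and $\Delta\in\Wb^C$. No consistency of $\Gamma/t$ is needed, and this is exactly what buys completeness for arbitrary $\CS$. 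The remaining conditions on $E^C$ and the Truth Lemma then go through as in the general subset-model case. In short, the ``delicate step'' you identified is indeed the crux, but your proposed resolution secretly reintroduces axiomatic appropriateness; the right move is to enlarge $W$ to all sets of formulas and use $\Gamma/t$ directly.
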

The completeness proof is by a canonical model construction as in the case of general subset models~\cite{StuderLehmannSubsetModel2019}.
We will only sketch main steps here. The canonical model is given as follows.

\begin{definition}[Canonical Model]\label{d:canMod:1} Let $\CS$ be an arbitrary constant specification. 
We define the canonical model $\model^C=(W^C, \WnullC, V^C, E^C)$ by:
\begin{itemize}
\item $W^C=\mathcal{P}(\LJ)$.
\item $\WnullC=\Set{\Gamma\in W^C| \Gamma\text{ is maximal $\JDCS$-consistent set of formulas}}$.
\item $V^C(\Gamma, F)=1\quad\text{ iff }\quad F\in\Gamma$;
\item $E^C(\Gamma, t)=\Set{\Delta\in W^C|\Delta\supseteq \Gamma/t}$  where
\[
\Gamma/t:=\{F\in\LJ\enspace|\enspace t:F\in\Gamma\}.
\]
\end{itemize}
\end{definition}

The essential part of the completeness proof is to show that the canonical model is a \dab $\CS$-subset model.
\begin{lemma}\label{l:isModel:1}
Let $\CS$ be an arbitrary constant specification.
The canonical model $\model^C$ is a \dab $\CS$-subset model.
\end{lemma}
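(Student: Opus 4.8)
The plan is to check, one clause at a time, that $\model^C$ satisfies every condition in the definition of a \dab $\CS$-subset model, working only with $\Gamma\in\WnullC$ since all constraints on $V^C$ and $E^C$ are imposed at normal worlds. First, $\WnullC\neq\emptyset$ follows from the consistency of $\JDCS$ together with a Lindenbaum construction. The backbone of the argument is the auxiliary fact that for every maximal consistent $\Gamma$, every term $t$ and every formula $F$,
\[
E^C(\Gamma,t)\subseteq[F]\quad\text{iff}\quad t:F\in\Gamma .
\]
The right-to-left direction is immediate, since $F\in\Gamma/t$ makes $F$ a member of every $\Delta\supseteq\Gamma/t$. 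The left-to-right direction is where $W^C=\powerset(\LJ)$ is essential: the set $\Gamma/t$ is itself a world, and it is the $\subseteq$-least element of $E^C(\Gamma,t)$, so $E^C(\Gamma,t)\subseteq[F]$ forces $\Gamma/t\in[F]$, i.e.~$F\in\Gamma/t$, i.e.~$t:F\in\Gamma$. Combining this with $V^C(\Gamma,t:F)=1\Leftrightarrow t:F\in\Gamma$ yields the required truth condition for justification formulas, and the remaining $V^C$ clauses ($V^C(\Gamma,\bot)=0$ and the clause for $\to$) are the standard properties of maximal consistent sets.

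For the conditions on $E^C$ I would argue as follows. The inclusion $E^C(\Gamma,s+t)\subseteq E^C(\Gamma,s)\cap E^C(\Gamma,t)$ reduces, after unfolding the definitions, to $\Gamma/s\cup\Gamma/t\subseteq\Gamma/(s+t)$, which is exactly what axiom $\axjplus$ delivers at the maximal consistent set $\Gamma$. For application, the displayed truth fact rewrites $\APP_\Gamma(s,t)$ as the set of $F$ for which there is an $H$ with $s:(H\to F)\in\Gamma$ and $t:H\in\Gamma$; axiom $\axj$ and closure of $\Gamma$ under modus ponens then give $s\cdot t:F\in\Gamma$, so $\APP_\Gamma(s,t)\subseteq\Gamma/(s\cdot t)$ and hence every $\Delta\in E^C(\Gamma,s\cdot t)$ lies in $[F]$ for each such $F$. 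The constant-specification clauses are obtained by reading $E^C(\Gamma,r)\subseteq[B]$ as $r:B\in\Gamma$ and invoking the rule (AN!): for $(c,A)\in\CS$ the formula $c:A$ and, more generally, each iterate $\underbrace{!\dots!}_{n}c:\dots:c:A$ is a theorem and therefore belongs to every $\Gamma\in\WnullC$.

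The crux, and the only place where the \dab models genuinely differ from general subset models, is the existence clause, now demanding some $\upsilon\in\Wb$ with $\upsilon\in E^C(\Gamma,t)$ rather than $\upsilon\in W_0$. I would simply take $\upsilon:=\Gamma/t$, which lies in $E^C(\Gamma,t)$ trivially. It remains to see $\Gamma/t\in\Wb$, i.e.~$\bot\notin\Gamma/t$, equivalently $t:\bot\notin\Gamma$; this is precisely axiom $\axjd$ ($t:\bot\to\bot$) together with the consistency of $\Gamma$. The point worth stressing is that no internalization is needed here: in the general subset model one must extend $\Gamma/t$ to a \emph{maximal} consistent set, and establishing the consistency of $\Gamma/t$ requires turning a derivation of $\bot$ from $\Gamma/t$ into a term $t'$ with $t':\bot\in\Gamma$, a step that relies on axiomatic appropriateness of $\CS$. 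By relaxing the target from a normal world to a merely $\bot$-free world, the \dab models need only the single instance $t:\bot\notin\Gamma$, which holds for every $\CS$. I expect this contrast to be the main conceptual obstacle, while all other clauses are routine transcriptions of the general subset model completeness proof.
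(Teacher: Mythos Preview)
Your proposal is correct and takes essentially the same approach as the paper: the paper only spells out the existence clause, choosing $\Delta:=\Gamma/t$ and using $\axjd$ to conclude $\bot\notin\Delta$ and hence $\Delta\in\Wb^C$, exactly as you do. You additionally verify the remaining clauses (the auxiliary equivalence $E^C(\Gamma,t)\subseteq[F]\Leftrightarrow t{:}F\in\Gamma$, the $+$, $\cdot$, and $\CS$ conditions), which the paper delegates to the general subset model construction in~\cite{StuderLehmannSubsetModel2019}; your explanatory contrast with the general case, where extending $\Gamma/t$ to a maximal consistent set requires internalization and hence axiomatic appropriateness, is precisely the point the paper is making with the \dab models.
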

\begin{proof}
Let us only show the condition 
\begin{equation}\label{eq:D:1}
\exists\upsilon\in \Wb^C \text{ with }\upsilon\in E(\omega, t)
\end{equation}
for all $\omega \in W_0$ and all terms $t$.

So let $t$ be an arbitrary term and $\Gamma \in \WnullC$. Since $\Gamma$ is a maximal $\JDCS$-consistent set of formulas, we find
$\lnot (t:\bot) \in \Gamma$ and thus $t:\bot \notin \Gamma$. Let $\Delta := \Gamma/t$. We find that $\bot\notin\Delta$ and by definition $V^C(\Delta,\bot)=0$. Thus $\Delta \in \Wb^C$. Moreover, again by definition, $\Delta \in E^C(\Gamma,t)$. Thus~\eqref{eq:D:1} is established.
 \end{proof}

Now the Truth lemma and the completeness theorem follow easily as in~\cite{StuderLehmannSubsetModel2019}.

\begin{remark}
In subset models, it is possible to reduce application to sum by introducing a new term $\cstar$, see~\cite{StuderLehmannSubsetModel2019}.
Our completeness result also holds in the setting with $\cstar$. However, the proof that the canonical model is well-defined is a bit more complicated as one has to consider the case of $\cstar$ separately.
\end{remark}

\section{No conflicts}
So far, we have considered the explicit version of $\neg\mathcal{O}\bot$. In normal modal logic, this is provably equivalent to $\neg(\mathcal{O}A \land \mathcal{O}\neg A)$.
In this section we study the explicit version of this principle, which we call \noc 
(\emph{No Conflicts}), saying that reasons are self-consistent. That is $A$ and $\lnot A$ cannot be obligatory for one and the same reason.
The axioms of $\JNC$ are the axioms of $\JD$ where $\axjd$ is replaced with:
\begin{fleqn}
\begin{equation}
\begin{array}{ll}\nonumber
\axnc & \lnot (t:A \land t:\lnot A).
\end{array}
\end{equation}
\end{fleqn}

Accordingly, a constant specification for $\JNC$ is defined like a constant specification for $\JD$ except that the constants justify axioms of $\JNC$.

Given a constant specification $\CS$ for  $\JNC$, the logic $\JNCCS$ is defined by the axioms of $\JNC$ and the rules modus ponens and axiom necessitation.

\begin{definition}[\noc subset model]
A \emph{\noc $\CS$-subset model} 
\[
\mathcal{M}=(W, W_0, V, E)
\]
is defined like a general $\CS$-subset model with the condition
\[
\exists\upsilon\in W_0 \text{ with }\upsilon\in E(\omega, t)
\]
being replaced with 
\[
\exists\upsilon\in \Wnc \text{ with }\upsilon\in E(\omega, t)
\]
where $\Wnc:=\{\omega\in W\ | \ \text{for all formulas A } (V(\omega, A) =0 \text{ or } V(\omega, \lnot A) =0)  \}$.
\end{definition}

The notion of \noc $\CS$-validity is now as expected.
\begin{definition}[\noc validity]
Let $\CS$ be a constant specification.
We say that a formula $F$ is \emph{ \noc $\CS$-valid} if for each   \noc $\CS$-subset model $\mathcal{M}=(W, W_0, V, E)$ and each $\omega\in W_0$, we have $\mathcal{M}, \omega\Vdash F$.
\end{definition}

\begin{theorem}[Soundness and Completeness]\label{th:sc:2}
Let $\CS$ be an arbitrary constant specification. 
For each formula $F$ we have 
\[
\JNCCS\vdash F \quad\text{if{f}}\quad \text{$F$ is \noc $\CS$-valid.}
\]
\end{theorem}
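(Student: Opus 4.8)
The plan is to follow the same two-step template as Theorem~\ref{th:sc:1}: first establish soundness by checking that every axiom of $\JNC$ is \noc $\CS$-valid and that the rules preserve validity, and then establish completeness by a canonical model construction in which the only genuinely new task is to verify the \noc existence clause. For soundness, the axioms $\axcl$, $\axjplus$, $\axj$, together with modus ponens and axiom necessitation, are validated exactly as for general subset models, since the conditions on $V$ and $E$ they rely on are unchanged and never appeal to the existence clause that distinguishes the model classes. The new case is $\axnc$. Fix a \noc $\CS$-subset model and $\omega\in W_0$, and suppose toward a contradiction that $V(\omega, t:A \land t:\lnot A)=1$. Since $\omega$ is normal the Boolean clauses give $V(\omega, t:A)=1$ and $V(\omega, t:\lnot A)=1$, hence $E(\omega, t)\subseteq[A]$ and $E(\omega, t)\subseteq[\lnot A]$. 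By the \noc model condition there is some $\upsilon\in\Wnc$ with $\upsilon\in E(\omega, t)$, so $V(\upsilon, A)=1$ and $V(\upsilon, \lnot A)=1$, contradicting $\upsilon\in\Wnc$. Thus $V(\omega, \lnot(t:A\land t:\lnot A))=1$, and $\axnc$ is \noc $\CS$-valid.

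For completeness I would take the canonical model of Definition~\ref{d:canMod:1}, with maximal consistency read relative to $\JNCCS$ rather than $\JDCS$. All the conditions of a general subset model are established exactly as in~\cite{StuderLehmannSubsetModel2019}, since the sum, application, and constant-specification clauses only use axioms and rules shared by $\JD$ and $\JNC$; so it remains to prove the analogue of Lemma~\ref{l:isModel:1}, namely that for every $\Gamma\in\WnullC$ and every term $t$ there is some $\upsilon\in\Wnc^C$ with $\upsilon\in E^C(\Gamma, t)$. I claim $\upsilon:=\Gamma/t$ works. Indeed $\Gamma/t\supseteq\Gamma/t$, so $\Gamma/t\in E^C(\Gamma, t)$, and it suffices to show that $\Gamma/t$ contains no complementary pair. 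If $A\in\Gamma/t$ and $\lnot A\in\Gamma/t$ for some $A$, then $t:A\in\Gamma$ and $t:\lnot A\in\Gamma$, whence $t:A\land t:\lnot A\in\Gamma$ by maximality; but $\axnc$ gives $\lnot(t:A\land t:\lnot A)\in\Gamma$, contradicting the consistency of $\Gamma$. Hence for every $A$ at most one of $A,\lnot A$ lies in $\Gamma/t$, which by the definition of $V^C$ is precisely the statement $\Gamma/t\in\Wnc^C$. With this condition in hand, the Truth lemma and the completeness theorem follow verbatim as in~\cite{StuderLehmannSubsetModel2019}.

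I expect no serious obstacle, as the argument is essentially routine once the parallel with the $\axjd$ case is in place. The one point demanding care is that $\axnc$ must be invoked for every formula $A$, including $A=\bot$, when verifying $\Gamma/t\in\Wnc^C$: unlike the $\JD$ setting, $\Gamma/t$ need not be a consistent theory here, since $t:\bot\in\Gamma$ is not excluded (recall that $\axnc$ does not imply $\axjd$ for arbitrary $\CS$). One therefore cannot appeal to consistency of $\Gamma/t$ and must read membership in $\Wnc^C$ purely as the absence of a syntactic complementary pair, which $\axnc$ guarantees uniformly across all $A$.
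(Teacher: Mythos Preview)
Your proposal is correct and follows essentially the same approach as the paper: the completeness half uses the canonical model of Definition~\ref{d:canMod:1} with $\WnullC$ taken relative to $\JNCCS$, and the verification that $\Gamma/t\in\Wnc^C$ via $\axnc$ is exactly the paper's argument. Your explicit soundness check for $\axnc$ and the cautionary remark that $\Gamma/t$ need not be a consistent set (so one must argue purely from the absence of complementary pairs) are welcome additions not spelled out in the paper.
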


Again the completeness proof uses the canonical model construction from Definition~\ref{d:canMod:1} except that we set
\begin{itemize}
\item $\WnullC=\Set{\Gamma\in W^C| \Gamma\text{ is maximal $\JNCCS$-consistent set of formulas}}$.
\end{itemize}
 Now we have to show that the defined structure is an \noc $\CS$-subset model.

\begin{lemma}
Let $\CS$ be an arbitrary constant specification.
The canonical model $\model^C$ is an  \noc $\CS$-subset model.
\end{lemma}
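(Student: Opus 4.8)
The plan is to verify that the canonical model $\model^C$ (with $\WnullC$ now consisting of maximal $\JNCCS$-consistent sets) satisfies all the defining conditions of an \noc $\CS$-subset model. Most of these conditions---the clauses governing $V^C$ on $\bot$, implications, and justification formulas, together with the conditions on $E^C$ concerning $s+t$, $s\cdot t$, and the constant specification---are identical to those already checked for the general subset model in~\cite{StuderLehmannSubsetModel2019}, and their verification is unaffected by the change from $\axjd$ to $\axnc$. Hence I would state that these carry over verbatim and focus attention on the one condition that genuinely differs, namely the replacement clause
\begin{equation}\label{eq:NoC:cond}
\exists\upsilon\in\Wnc^C\text{ with }\upsilon\in E^C(\Gamma,t)
\end{equation}
required for all $\Gamma\in\WnullC$ and all terms $t$, where $\Wnc^C=\{\Delta\in W^C\mid\text{for all }A,\ V^C(\Delta,A)=0\text{ or }V^C(\Delta,\lnot A)=0\}$.

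To establish~\eqref{eq:NoC:cond}, I would mirror the proof of Lemma~\ref{l:isModel:1} but exploit $\axnc$ in place of $\axjd$. Fix an arbitrary term $t$ and a maximal $\JNCCS$-consistent set $\Gamma$, and set $\Delta:=\Gamma/t=\{F\mid t:F\in\Gamma\}$. By the definition of $E^C$ we have $\Delta\in E^C(\Gamma,t)$ immediately, so the whole burden is to show $\Delta\in\Wnc^C$; that is, there is no formula $A$ with both $A\in\Delta$ and $\lnot A\in\Delta$. Suppose toward a contradiction that $A\in\Delta$ and $\lnot A\in\Delta$ for some $A$. Unwinding the definition of $\Delta$, this means $t:A\in\Gamma$ and $t:\lnot A\in\Gamma$. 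Since $\Gamma$ is an instance of $\axnc$, it contains $\lnot(t:A\land t:\lnot A)$; together with $t:A,\,t:\lnot A\in\Gamma$ and closure of $\Gamma$ under modus ponens and conjunction, this yields $\bot\in\Gamma$, contradicting the consistency of $\Gamma$. Hence no such $A$ exists, $\Delta\in\Wnc^C$, and~\eqref{eq:NoC:cond} holds.

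I anticipate the main conceptual point---rather than a genuine obstacle---to be recognizing that the \emph{local} self-consistency demanded by $\axnc$ translates precisely into membership of $\Delta=\Gamma/t$ in the set $\Wnc^C$, just as $\axjd$ translated into membership of $\Gamma/t$ in $\Wb^C$. The argument is essentially the dual of Lemma~\ref{l:isModel:1}, using axiom $\axnc$ at the single world $\Gamma$ to force the derived world $\Gamma/t$ to avoid any pair $A,\lnot A$ simultaneously. Once this lemma is in place, the Truth lemma and the completeness half of Theorem~\ref{th:sc:2} follow exactly as in~\cite{StuderLehmannSubsetModel2019}, with soundness for $\axnc$ being a routine check that in any \noc $\CS$-subset model no normal world can satisfy $t:A\land t:\lnot A$: such a world would require $E(\omega,t)\subseteq[A]\cap[\lnot A]$, yet the defining condition furnishes some $\upsilon\in\Wnc$ inside $E(\omega,t)$, and no world in $\Wnc$ lies in $[A]\cap[\lnot A]$.
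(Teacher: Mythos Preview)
Your proof is correct and follows essentially the same approach as the paper: both isolate the \noc\ condition, set $\Delta:=\Gamma/t$, and use that $\Gamma$ contains every instance of $\axnc$ to conclude that $\Delta$ cannot contain both $A$ and $\lnot A$ (you phrase this by contradiction, the paper directly, but the argument is identical). One small slip: you write ``Since $\Gamma$ is an instance of $\axnc$'' when you mean ``since $\lnot(t{:}A\land t{:}\lnot A)$ is an instance of $\axnc$, it belongs to $\Gamma$''.
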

\begin{proof}
As before, we only show the condition 
\begin{equation}\label{eq:D:2}
\exists\upsilon\in \Wnc^C \text{ with }\upsilon\in E(\omega, t)
\end{equation}
for all $\omega \in W_0$ and all terms $t$.

So let $t$ be an arbitrary term  and $\Gamma \in \WnullC$.
Let $A$ be an arbitrary formula. 
Since $\Gamma$ is a is maximal $\JNCCS$-consistent set of formulas, we find
\[
\lnot (t:A \land t:\lnot A) \in \Gamma
\]
and thus $t:A \land t:\lnot A \notin \Gamma$. Thus, again by maximal consistency, 
\[
t:A \notin \Gamma \text{ or } t:\lnot A \notin \Gamma.
\]
Let $\Delta := \Gamma/t$. We find that 
\[
A \notin \Delta \text{ or } \lnot A \notin \Delta
\]
and hence, by definition,
\[
V^C(\Delta,A)=0 \text{ or } V^C(\Delta,\lnot A)=0.
\]
Thus $\Delta \in \Wnc^C$. Moreover, again by definition, $\Delta \in E^C(\Gamma,t)$. Thus~\eqref{eq:D:2} is established.
\end{proof}

Again the Truth lemma and the completeness theorem follow easily.

\section{Formal comparison}

The following lemmas establish the exact relationship between $\JD$ and $\JNC$.
First we show that $\JDCS$ proves that reasons are consistent among them, i.e.~that $\lnot (s:A \land t:\lnot A)$ holds for arbitrary terms $s$ and $t$, which is the consistency principle used in~\cite{FaroldiHyperPracticalReasons}.

\begin{lemma}\label{onedir:1}
Let\/ $\CS$ be an arbitrary constant specification.
Then $\JDCS$ proves $\lnot (s:A \land t:\lnot A)$ for all terms $s, t$ and all formulas $A$.
\end{lemma}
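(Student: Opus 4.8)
The plan is to turn the desired negation into a derivation of $\bot$ from the antecedent and then discharge it by classical propositional logic ($\axcl$). So I would argue that $\JDCS$, together with the hypotheses $s:A$ and $t:\lnot A$, proves $\bot$, and conclude $\lnot(s:A \land t:\lnot A)$ by propositional reasoning. The key observation to set this up is that $\lnot A$ is by definition $A\to\bot$, so the formula $t:\lnot A$ is literally $t:(A\to\bot)$, which is exactly the shape needed to feed into the application axiom.

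Concretely, I would take the instance of $\axj$ obtained by substituting the first term by $t$, the second by $s$, and reading $B$ as $\bot$, namely
\[
t:(A\to\bot)\to\bigl(s:A\to (t\cdot s):\bot\bigr).
\]
Combining this with the two hypotheses $t:(A\to\bot)$ and $s:A$ via two applications of (MP) yields $(t\cdot s):\bot$. On the other hand, $\axjd$ holds for every term, so in particular its instance $\lnot((t\cdot s):\bot)$ is a theorem. Thus I have both $(t\cdot s):\bot$ and $(t\cdot s):\bot\to\bot$, and one more (MP) produces $\bot$.

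Having derived $\bot$ from the assumptions $s:A$ and $t:\lnot A$, standard classical manipulation (the deduction-theorem style reasoning available through $\axcl$) gives $(s:A\land t:\lnot A)\to\bot$, which is exactly $\lnot(s:A\land t:\lnot A)$. I would note that this argument uses only the application axiom $\axj$ and the consistency axiom $\axjd$ on top of classical logic; in particular it invokes neither the aggregation axiom $\axjplus$, nor axiom necessitation, nor any property of $\CS$, which is precisely why the statement holds for an \emph{arbitrary} constant specification. There is no genuine obstacle here: the only point requiring care is matching the schematic variables of $\axj$ and unfolding the definition of $\lnot$ so that $t:\lnot A$ is recognized as $t:(A\to\bot)$.
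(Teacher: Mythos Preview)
Your proof is correct and follows exactly the paper's approach: unfold $\lnot A$ as $A\to\bot$, apply $\axj$ to $t:(A\to\bot)$ and $s:A$ to obtain $(t\cdot s):\bot$, and contradict $\axjd$. The paper's version is terser, but the argument and the observation that neither $\axjplus$ nor any property of $\CS$ is needed are the same.
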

\begin{proof}
Suppose towards a contradiction that $s : A \wedge t : \neg A$.
Thus we have $s : A$ and $t:\lnot A$ where the latter is an abbreviation for $ t : (A \rightarrow \bot)$ (by the definition of the symbol $\lnot$).
Thus using axiom $\axj$, we get  $t \cdot s :\bot$ and by axiom~$\axjd$ we conclude $\bot$.
\end{proof}

\begin{corollary}\label{c:onedir:1}
For any constant specification $\CS$,  $\JDCS$ proves every instance of $\axnc$.
\end{corollary}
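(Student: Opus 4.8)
The plan is to recognize that this corollary is nothing more than the diagonal instance of Lemma~\ref{onedir:1}. Recall that the axiom schema $\axnc$ is $\lnot (t:A \land t:\lnot A)$, where the \emph{same} term $t$ appears on both sides of the conjunction. Lemma~\ref{onedir:1} has already established the strictly more general fact that $\JDCS$ proves $\lnot (s:A \land t:\lnot A)$ for \emph{arbitrary} terms $s$ and $t$ and every formula $A$. Hence I would simply specialize that lemma by taking $s = t$.

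Concretely, fix an arbitrary term $t$ and an arbitrary formula $A$. Applying Lemma~\ref{onedir:1} with the choice of terms $s := t$ and the same $A$ yields $\JDCS \vdash \lnot (t:A \land t:\lnot A)$, which is exactly the instance of $\axnc$ determined by $t$ and $A$. Since $t$ and $A$ were arbitrary, $\JDCS$ proves every instance of $\axnc$, as claimed. Note that no assumption on $\CS$ is needed beyond what Lemma~\ref{onedir:1} already tolerates, namely an arbitrary constant specification, which matches the hypothesis of the corollary.

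There is essentially no obstacle here: the only thing to check is that the two terms in $\axnc$ coincide, so that the general statement of Lemma~\ref{onedir:1} indeed subsumes it, and that the constant specification in the corollary is the same (arbitrary) $\CS$ used in the lemma. Both are immediate, so the entire proof reduces to the phrase ``this is the special case $s = t$ of Lemma~\ref{onedir:1}.''
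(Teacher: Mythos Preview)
Your proposal is correct and matches the paper's approach exactly: the paper states the corollary immediately after Lemma~\ref{onedir:1} without proof, so it is intended to follow by specializing $s = t$ just as you do.
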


\begin{remark}
It is only by coincidence that Lemma~\ref{onedir:1}, and thus also Corollary~\ref{c:onedir:1}, hold for arbitrary constant specifications.
If we base our propositional language on different connectives (say $\land$ and $\lnot$ instead of $\to$ and $\bot$), then 
Lemma~\ref{onedir:1} and  Corollary~\ref{c:onedir:1} only hold for axiomatically appropriate constant specifications.

The proof of Lemma~\ref{onedir:1}  is as follows. Since $\CS$ is axiomatically appropriate, there exists a term $r$ such that 
\begin{equation}\label{eq:coinc:1}
r: (\lnot A \to (A \to \bot)) 
\end{equation}
is provable where $\bot$ is defined as $P \land \lnot P$ (for some fixed $P$) and $F \to G$ is defined as $\lnot (F \land \lnot G)$.
From \eqref{eq:coinc:1} and axiom $\axj$ we get
\[
t : \lnot A \to r\cdot t : (A \to \bot).
\]
Thus from  $s : A \wedge t : \neg A$,  we obtain $(r\cdot t) \cdot s: \bot$, which contradicts axiom $\axjd$ as before.
\end{remark}

Next we show that also $\JNCCS$ proves that reasons are consistent among them.

\begin{lemma}\label{l:notbot:1}
Let\/ $\CS$ be an arbitrary constant specification. Then $\JNCCS$ proves $\lnot (s:A \land t:\lnot A)$ for all terms $s, t$ and all formulas $A$.
\end{lemma}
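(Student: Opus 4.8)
The plan is to exploit the aggregation operation $+$ to merge the two distinct reasons $s$ and $t$ into a single reason $s+t$, thereby reducing the multi-reason claim to the single-term instance that axiom $\axnc$ forbids outright. This is the $\JNCCS$ analogue of the argument used for $\JDCS$ in Lemma~\ref{onedir:1}: there the application operation $\cdot$ together with $\axj$ was used to collapse two reasons into a justification of $\bot$, whereas here the relevant operation is $+$ together with the aggregation axiom $\axjplus$.

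Concretely, I would argue by contradiction. Suppose $s:A \land t:\lnot A$. From the first conjunct $s:A$ we obtain $s:A \lor t:A$, so axiom $\axjplus$ yields $(s+t):A$. Symmetrically, from $t:\lnot A$ we obtain $s:\lnot A \lor t:\lnot A$, so $\axjplus$ (this time via the second disjunct) yields $(s+t):\lnot A$. Combining the two gives $(s+t):A \land (s+t):\lnot A$, which is precisely an instance of the conjunction ruled out by axiom $\axnc$ with the single term $s+t$. Hence the assumption leads to $\bot$, and the desired $\lnot(s:A \land t:\lnot A)$ follows by propositional reasoning via $\axcl$.

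There is no genuine obstacle here, but one point is worth flagging: the derivation makes essential use of the presence of $+$ in the language. Without it—that is, in the system $\JNC^-$ discussed in the overview—this aggregation step is unavailable, and $\lnot(s:A \land t:\lnot A)$ for distinct $s,t$ need not be derivable at all; this is exactly what makes $\JNC^-$ suitable for representing genuinely conflicting obligations stemming from different reasons. So the content of the lemma is not a clever derivation but the observation that $\axjplus$ alone suffices to lift the single-reason consistency axiom $\axnc$ to the multi-reason statement, for an arbitrary $\CS$.
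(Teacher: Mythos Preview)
Your proof is correct and follows exactly the paper's approach: assume $s:A \land t:\lnot A$, apply $\axjplus$ to obtain $(s+t):A \land (s+t):\lnot A$, and derive a contradiction from $\axnc$. Your remark about the essential role of $+$ also anticipates the paper's Remark~\ref{rem:sum:1}.
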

\begin{proof}
Suppose towards a contradiction that $s:A \land t:\lnot A$ holds.
Using axiom~$\axjplus$ we immediately obtain $s+t:A \land s+t:\lnot A$.
By axiom $\axnc$ we conclude $\bot$, which establishes $\lnot (s:A \land t:\lnot A)$.
\end{proof}

Next we show  that $\JNCCS$ with an axiomatically appropriate  constant specification proves $\lnot (t:\bot)$.

\begin{lemma}\label{l:notbot:2}
Let\/ $\CS$ be an axiomatically appropriate constant specification. Then $\JNCCS$ proves $\lnot (t: \bot)$ for each term $t$.
\end{lemma}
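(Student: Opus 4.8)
The plan is to prove $\lnot(t:\bot)$ by deriving a contradiction from the assumption $t:\bot$, using the axiomatically appropriate constant specification to supply a justification for a suitable propositional tautology. The key observation is that $\bot$ classically implies any formula, and in particular $\bot \to \lnot\bot$, i.e.\ $\bot \to \top$, is a propositional tautology; but more usefully, $\bot \to A$ and $\bot \to \lnot A$ are both tautologies for any fixed formula $A$. Since $\CS$ is axiomatically appropriate, each of these tautologies (being an instance of $\axcl$) has a justifying constant, so by the necessitation lemma stated earlier in the excerpt there exist terms justifying them.

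Concretely, first I would pick an arbitrary formula $A$ and observe that both $\bot \to A$ and $\bot \to \lnot A$ are provable in $\JNCCS$ (they are classical tautologies, hence instances of $\axcl$). By the necessitation lemma (the Lemma following Definition~\ref{def:cs_axiomatocally_appropriate}), which applies precisely because $\CS$ is axiomatically appropriate, there are terms $r$ and $r'$ with $\JNCCS \vdash r:(\bot \to A)$ and $\JNCCS \vdash r':(\bot \to \lnot A)$. Next, assuming $t:\bot$ towards a contradiction, I would apply axiom $\axj$ twice: from $r:(\bot \to A)$ and $t:\bot$ we obtain $r\cdot t : A$, and from $r':(\bot \to \lnot A)$ and $t:\bot$ we obtain $r'\cdot t : \lnot A$.

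Now the two justifications $r\cdot t : A$ and $r'\cdot t : \lnot A$ use \emph{different} terms, so $\axnc$ does not apply directly. To remedy this I would invoke axiom $\axjplus$ to aggregate them onto a common term: from $r\cdot t : A$ we get $(r\cdot t)+(r'\cdot t) : A$, and from $r'\cdot t : \lnot A$ we get $(r\cdot t)+(r'\cdot t) : \lnot A$. Setting $u := (r\cdot t)+(r'\cdot t)$, we have both $u:A$ and $u:\lnot A$, which contradicts $\axnc$, namely $\lnot(u:A \land u:\lnot A)$. Discharging the assumption $t:\bot$ yields $\lnot(t:\bot)$, as required. Alternatively, and more directly, one could appeal to Lemma~\ref{l:notbot:1}: having derived $r\cdot t : A$ and $r'\cdot t : \lnot A$ from the assumption $t:\bot$, the instance $\lnot((r\cdot t):A \land (r'\cdot t):\lnot A)$ of that lemma immediately gives the contradiction, avoiding the explicit use of $\axjplus$.

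The main obstacle is the reliance on axiomatic appropriateness: it is exactly what licenses the passage from the tautologies $\bot\to A$ and $\bot\to\lnot A$ to their internalized justified versions $r:(\bot\to A)$ and $r':(\bot\to\lnot A)$. Without axiomatically appropriate $\CS$ there need be no such justifying terms, and indeed Lemma~\ref{onedir:3} is cited in the overview as showing that $\axnc$ fails to imply $\axjd$ for non-appropriate constant specifications, so this hypothesis cannot be dropped. Everything else is routine application of $\axj$ together with either $\axjplus$ and $\axnc$ or, more economically, the already-established Lemma~\ref{l:notbot:1}.
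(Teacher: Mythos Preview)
Your proposal is correct and follows essentially the same route as the paper: obtain justified versions of $\bot\to A$ and $\bot\to\lnot A$ from axiomatic appropriateness, apply $\axj$ to get $r\cdot t:A$ and $r'\cdot t:\lnot A$ from $t:\bot$, and then finish via Lemma~\ref{l:notbot:1} (the paper takes exactly this last option, with $A$ instantiated by a fixed atom~$P$). One small remark: you do not need the full internalization lemma here, since $\bot\to A$ and $\bot\to\lnot A$ are themselves instances of $\axcl$, so axiomatic appropriateness directly supplies constants $r,r'$ with $r:(\bot\to A)$ and $r':(\bot\to\lnot A)$ via (AN!).
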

\begin{proof}
Because $\CS$ is axiomatically appropriate, there are terms $r$ and $s$ such that
\[
r:(\bot \to P)  \quad\text{and}\quad s:(\bot  \to \lnot P).
\]
Therefore, we get
\[
t:\bot \to r\cdot t: P   \quad\text{and}\quad t:\bot \to s\cdot t: \lnot P.
\]
Thus we have $t:\bot \to (r\cdot t: P \land  s\cdot t: \lnot P)$. Together with the previous lemma, this yields
$t :\bot \to \bot$, which is $\lnot (t :\bot)$.
\end{proof}

Here the requirement of an axiomatically appropriate constant specification is necessary.

\begin{lemma}\label{onedir:3}
There exists  a
\noc $\CS$-subset model $\mathcal{M}=(W, W_0, V, E)$ with some $\omega \in W_0$ such that 
\[
\mathcal{M}, \omega \Vdash t:\bot
\]
for some term $t$.
\end{lemma}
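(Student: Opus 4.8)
The plan is to exhibit an explicit finite countermodel, taking the empty constant specification $\CS=\emptyset$, which is as far from axiomatically appropriate as possible. The guiding observation is that in general subset models the principle $\axjd$ is forced by the witness requirement $\exists\upsilon\in W_0$ with $\upsilon\in E(\omega,t)$, whereas a \noc model only demands a witness in $\Wnc$, and a world can belong to $\Wnc$ while still making $\bot$ true. Accordingly I would build a two-world model with one normal world $\omega$ and one ``impossible'' non-normal world $\upsilon$ that satisfies $\bot$ and nothing else, and then route every term from $\omega$ into $\{\upsilon\}$.

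Concretely, set $W=\{\omega,\upsilon\}$ and $W_0=\{\omega\}$. At the normal world, let $V(\omega,\cdot)$ be determined by the usual clauses (for instance, make all atoms false), which fixes the Boolean cases and, via $E$, the justification cases. At $\upsilon$, where $V$ is unconstrained, set $V(\upsilon,\bot)=1$ and $V(\upsilon,A)=0$ for every formula $A\neq\bot$. Finally, put $E(\omega,r)=\{\upsilon\}$ for every term $r$, and let $E(\upsilon,\cdot)$ be arbitrary since it is unconstrained. The recursion defining $V(\omega,\cdot)$ is well-founded because $V(\omega,t:F)$ depends only on the freely chosen value $V(\upsilon,F)$ and on the structurally smaller $V(\omega,F)$.

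It then remains to check the defining conditions of a \noc model at $\omega$. The membership $\upsilon\in\Wnc$ holds because the only formula true at $\upsilon$ is $\bot$, and $\lnot\bot$ is syntactically an implication, hence distinct from $\bot$ and false at $\upsilon$; so no formula $A$ has both $A$ and $\lnot A$ true there. Consequently the \noc witness condition holds, as $E(\omega,r)=\{\upsilon\}$ for every $r$. The clause for $+$ is immediate. The clause for $\cdot$ is the delicate point, and I would argue it holds vacuously: the only $H$ with $E(\omega,t)=\{\upsilon\}\subseteq[H]$ is $H=\bot$, but then $\bot\to F$ is an implication and hence false at $\upsilon$, so $E(\omega,s)\subseteq[H\to F]$ fails for every $F$; thus $\APP_\omega(s,t)=\emptyset$ and the required inclusion is trivially satisfied. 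With $\CS=\emptyset$ there are no constant or introspection conditions to verify. Finally, $\mathcal{M},\omega\Vdash t:\bot$ because $E(\omega,t)=\{\upsilon\}=[\bot]$.

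The main obstacle I anticipate is exactly the tension at $\upsilon$: it must satisfy $\bot$ to witness $t:\bot$ at $\omega$, yet it must remain in $\Wnc$ and must not inject unwanted formulas into $\APP_\omega(s,t)$. Making $\upsilon$ satisfy \emph{only} $\bot$ resolves all three demands at once, and choosing $\CS=\emptyset$ sidesteps the constant conditions, which would otherwise force $\upsilon$ to satisfy the justified axioms and thereby break either the $\Wnc$ membership or the emptiness of $\APP_\omega(s,t)$.
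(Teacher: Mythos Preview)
Your construction is exactly the one the paper gives: empty $\CS$, two worlds with $W_0=\{\omega\}$, $V(\upsilon,\bot)=1$ and $V(\upsilon,F)=0$ otherwise, and $E(\omega,r)=\{\upsilon\}$ for all $r$. Your verification is in fact more thorough than the paper's, which simply asserts that $\upsilon\in\Wnc$ and the model is well-defined; your explicit check that $\APP_\omega(s,t)=\emptyset$ (because the only $H$ true at $\upsilon$ is $\bot$, while $\bot\to F$ is an implication and hence false at $\upsilon$) fills in the step the paper leaves to the reader.
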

\begin{proof}
Consider the empty $\CS$ and the following model:
\begin{enumerate}
\item $W =\{\omega, \nu\}$ and $W_0 =\{\omega\}$
\item $V(\nu,\bot) =1$ and $V(\nu,F) =0$ for all other formulas $F$
\item $E(\omega ,t) = \{\nu\}$ for all terms $t$.
\end{enumerate} 
We observe that $\nu \in \Wnc$. 
So the model is well-defined.
Further, we find $E(\omega ,t) \subseteq [\bot]$.
Since $\omega \in W_0$, we get $V(\omega, t:\bot)=1$. We conclude
\[
\mathcal{M}, \omega \Vdash t:\bot.
\qedhere
\]
\end{proof}

\begin{remark}\label{rem:sum:1}
For Lemmas~\ref{l:notbot:1} and~\ref{l:notbot:2}, the presence of the $+$ operation is essential. Consider a term language without $+$ and the logic $\JNC^-$ being  $\JNC$ without $\axjplus$. Let $\CS$ be an axiomatically appropriate $\CS$ for $\JNC^-$.
There is a \noc $\CS$-subset model $\mathcal{M}$ for  $\JNCCS^-$ with a normal world $\omega$ such that
\[
\mathcal{M}, \omega \Vdash s:P \land t:\lnot P 
\]
for some terms $s$ and $t$ and some proposition $P$. 

Hence if we drop the $+$ operation, we can have self consistent reasons without getting reasons that are consistent among them even in the presence of an axiomatically appropriate constant specification.
\end{remark}

Instead of using an axiomatically appropriate constant specification, we could also add the schema $s:\top$ to $\JNCCS$ in order to derive $\axjd$. 

\begin{lemma}
Let $\CS$ be an arbitrary constant specification.
Let $\JNCCS^+$ be
$\JNCCS$ extended by the schema
$
s:\top
$
for all terms $s$.
We find that
\[
\JNCCS^+ \vdash \lnot (t: \bot) \quad\text{ for each term $t$.}
\]
\end{lemma}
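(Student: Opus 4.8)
The plan is to derive $\neg(t:\bot)$ directly from two ingredients, and crucially \emph{without} invoking the operations $\cdot$ or $+$: namely the new schema $s:\top$ instantiated at the very term $t$, together with the single instance of $\axnc$ obtained by taking $A := \bot$.

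The first step is to recall that $\top$ is, by definition, an abbreviation for $\neg\bot$. Hence the schema $s:\top$, read at $s := t$, already gives us $\JNCCS^+ \vdash t:\neg\bot$ (that is, $t:\top$). The second step is to look at the instance of $\axnc$ with $A := \bot$, namely $\neg(t:\bot \land t:\neg\bot)$. Because $\neg\bot$ unfolds to $\top$, this axiom instance literally reads $\neg(t:\bot \land t:\top)$, i.e.\ it forbids a single reason $t$ from simultaneously justifying $\bot$ and $\top$. The final step is then pure propositional reasoning inside $\axcl$: from $\neg(P \land Q)$ together with $Q$ one infers $\neg P$. Setting $P := t:\bot$ and $Q := t:\neg\bot$, the two facts established above yield $\neg(t:\bot)$, as required.

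I do not expect a genuine obstacle here; the only thing one must notice is that, since $\top$ unfolds to $\neg\bot$, the axiom $\axnc$ at $A = \bot$ is \emph{exactly} the link between $t:\bot$ and $t:\top$. In particular, and in contrast to Lemma~\ref{l:notbot:2}, no appeal to an axiomatically appropriate constant specification and no use of application $\cdot$ or aggregation $+$ is needed: the schema $s:\top$ supplies the justification $t:\top$ for free, which is precisely the ingredient that axiomatic appropriateness would otherwise have had to provide.
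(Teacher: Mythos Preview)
Your proof is correct and follows essentially the same approach as the paper: both instantiate $\axnc$ at $A := \bot$ to obtain $\lnot(t{:}\bot \land t{:}\lnot\bot)$, invoke the new schema at $s := t$ to get $t{:}\top$ (i.e.\ $t{:}\lnot\bot$), and then conclude $\lnot(t{:}\bot)$ by propositional reasoning. The only cosmetic difference is the order of presentation.
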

\begin{proof}
The following is an instance of axiom $\axnc$
\[
\lnot (t: \bot \land t:\lnot \bot).
\]
Using the definition $\top := \lnot \bot$ and propositional reasoning, we obtain
\[
t:\top \to \lnot (t :\bot).
\]
Using $t:\top$ and modus ponens, we conclude $\lnot (t: \bot)$.
\end{proof}

\section{Remarks}
There are two main advantages in using the justification logic framework to deal with deontic matters. First, one can explicitly track which reasons are reasons for what and perform operation on them, thus having a higher degree of accuracy in formal representations of normative reasoning: every obligation has a source. Puzzles and paradoxes such as Ross' are very easy to identify and, under a plausible set-up, disappear. In the present paper we have seen how justification logic provides a means to keep track of the source of impossible and inconsistent obligations, thus helping not to conflate the two.

Second, the framework allows for the hyperintensionality of obligation, namely that logically equivalent contents may not be normatively equivalent. In general it is not the case that if $t : F$ and $F \equiv G$, then $t : G$.
This also ensures a finer-grained formal approach to everyday normative reasoning that is currently unavailable in more standard approaches.

When we come to the specific topic of the present paper, however, we have to remark that it is possible to distinguish between $\neg\mathcal{O}\bot$ and $\neg(\mathcal{O}A \land \mathcal{O}\neg A)$ also in some non-normal implicit modal systems, as we noted in Sect.~\ref{overview}, and in particular in Chellas' system \textbf{D} (cf.~\cite{chellas74,chellas}), which dispenses with axiom schema M: $\mathcal{O}(A \land B) \rightarrow \mathcal{O}A \land \mathcal{O}B$.

Chellas minimal monadic deontic logic \textbf{D} builds as usual on PC, adds $\neg\mathcal{O}\bot$ as an axiom, and has rule ROM: $A \rightarrow B / \mathcal{O}A \rightarrow \mathcal{O}B$.
In Chellas' logic the collapse is indeed avoided, because $\neg(\mathcal{O}A \land \mathcal{O}\neg A)$ is not derivable from $\neg\mathcal{O}\bot$.

How does Chellas' approach compare to the one developed in the present paper? Given the apparent similarities, let's focus on the differences, both technical and philosophical. Rule ROM could be questioned in a deontic context: however, this rule is fundamental in Chellas' system, therefore one cannot ignore it (selectively or not); whereas in a justification logic context we can have a finer-grained control on which axioms get an ``automatic", as it were, normative justification, by fine-tuning the constant specification. 

Philosophically, we can start from the semantic interpretation of the obligation operator. For Chellas, ``OA is true at a possible world just in case the world has a non-empty class of deontic alternatives throughout which A is true. The picture is one of possibly empty collections of non-empty classes of worlds functioning as moral standards: what ought to be true is what is entailed by one of these moral standards \cite[p.24]{chellas74}''. Chellas uses a neighborhood semantics. A standard, for him, is a collection of propositions. A term, in the context of the present paper, is instead interpreted as a set of worlds.

Moreover, Chellas' system is still an implicit modal logic, so it cannot keep track and reason with the sources of obligations. And indeed this reading is consistent with Chellas' intended interpretation of the obligation operator: What ought to be true is what is entailed by one of these moral standards. But which? In a justification logic context, for instance, if one wants to retain Chellas' ideas to interpret terms as moral standards, one can keep track of which moral standard requires what.

\section{Conclusion}
We provided a novel semantics for justification logics with axiom D that does \emph{not} require an axiomatically appropriate constant specification, i.e.~not every axiom needs to be justified by a constant. This can be crucial to have more control on the logic and solve some traditional puzzles such as Ross'.
Axiom~D can be formulated in at least two equivalent ways in normal modal logic, either 
with inconsistent obligations ($\neg(\mathcal{O}A \land \mathcal{O}\neg A)$) 
or
with one impossible obligation ($\neg\mathcal{O}\bot$). 
We proved that 
their explicit versions are interderivable in $\JD$ and $\JNC$ only when the constant specification is axiomatically appropriate.
In particular, our technical results are:
\begin{enumerate}
\item $\JDCS$ proves $\axnc$ for axiomatically appropriate $\CS$ and vice versa
\item $\JNCCS$ proves $\axjd$ for axiomatically appropriate $\CS$.
\item $\JDCS$ proves $\axnc$ for arbitrary $\CS$ only if the language is based on the Boolean  connectives $\to$ and $\bot$.
\item $\JNCCS$ does not prove $\axjd$  for arbitrary $\CS$. 
\item $\JNCCS^-$ does not prove $\axjd$  for axiomatically appropriate  $\CS$. 
\end{enumerate}
Having more control not only on how to formulate D, but also on how to specify the constant specification is philosophically perspicuous: it avoids conflating impossible and conflicting obligations and can encode why this is the case, e.g. for conceptual (logical) or contingent reasons.

Recently, it was shown that principle $\axnc$ is also very useful for analyzing epistemic situations in the context of quantum physics~\cite{StuderQuantum}.


\end{document}